\newtheorem{proposition}{Proposition}
\theoremstyle{remark}
\begin{document}
\newcommand{\real}{\textrm{Re}\:}
\newcommand{\ddp}{\partial}
\newcommand{\sto}{\stackrel{s}{\to}}
\newcommand{\supp}{\textrm{supp}\:}
\newcommand{\wto}{\stackrel{w}{\to}}
\newcommand{\ssto}{\stackrel{s}{\to}}
\newcounter{foo}
\providecommand{\norm}[1]{\lVert#1\rVert}
\providecommand{\abs}[1]{\lvert#1\rvert}

\title{The few--body universality is not exact for more than three particles}

\author{Dmitry K. Gridnev}

\address{FIAS, Ruth-Moufang-Stra{\ss}e 1, D--60438 Frankfurt am Main, Germany}
\ead{gridnev@fias.uni-frankfurt.de}
\begin{abstract}
In the literature it is conjectured that the ground state energies of three and four bosons are universally related for all 
pair--interactions given that two bosons have a zero energy resonance and no negative energy bound states. Here it is proved analytically that such relation cannot be exact. 
\end{abstract}

\pacs{03.65.Ge, 03.65.Db, 21.45.-v, 67.85.-d, 02.30.Tb}

% Keywords required only for MST, PB, PMB, PM, JOA, JOB?
%\vspace{2pc}
%\noindent{\it Keywords}: Article preparation, IOP journals
% Uncomment for Submitted to journal title message
%\submitto{\JPA}
% Comment out if separate title page not required
%\maketitle

\section{Introduction}\label{sec:1}

In 1970 V. Efimov predicted \cite{vefimov} a remarkable phenomenon now called the Efimov effect, which can be stated as follows.
If the negative continuous spectrum of a three-particle Hamiltonian $H$ is empty but
at least two of the particle pairs have a resonance at zero energy then $H$ has an infinite number of bound states with the energies $E_n <0$ for $n = 1, 2, \ldots$. 
If the particles are bosons then the limit ratio of the energy levels satisfies the equation $\lim_{n \to \infty} E_n / E_{n+1} = e^{2\pi/s_0}$, where the 
constant $s_0$ is universal and obeys the equation \cite{sobolev}
\begin{equation}
 s_0 = \frac{8}{\sqrt{3}} \frac{\sinh(\pi s_0/6)}{\cosh(\pi s_0/2)} . 
\end{equation}
Locating the root of this equation yields $s_0 \simeq 1.006$. In particular, this result holds for \textit{all} pair--interactions, which are bounded and fall off faster than 
$(1 + r^2)^{-1}$, therefore, one speaks of \textit{three--body universality} \cite{braatenphysrep}.  The first sketch of mathematical proof
of the Efimov effect was done by
L. D. Faddeev shortly after V. Efimov told him about his discovery \cite{efimovprivate}. The first published proof, which was not completely rigorous, 
appeared in \cite{amadonoble}. Later D. R. Yafaev \cite{yafaev} basing on the Faddeev's idea presented a complete proof thus turning 
the Efimov effect into a mathematical fact. 
In \cite{ovchinnikov,fonseca} one finds other proofs by different methods. Tamura \cite{tamura} generalized the proof to the case when pair--potentials can take both signs. 
In \cite{wang} the author claimed having generalized the result in \cite{yafaev,sobolev} to the case of three clusters but the proof in \cite{wang}
contains a mistake \cite{wangwrong}. Efimov states evaded any experimental evidence for 35 years since their prediction until Kraemer
\textit{et al.} \cite{kraemer} reported on their discovery in an ultracold gas of Cesium atoms. 

In the literature it is claimed \cite{stecher,naturephysics,hammer} that few--body systems with $4 \leq N \leq 10$ identical particles 
show a universal behavior, namely the energies of the $n$-th level in three and $N$--body systems are universally related given that two--particle systems 
have a zero energy resonance and no negative energy bound states. Universal in this context means that such ratios are independent of the pair--interaction. 
The aim of the present letter is to show that such relation cannot hold exactly.  
We shall use the following operator notation \cite{jpa2}. $A\geq 0$ means that $\langle f| A|f\rangle \geq 0$ for all admissible $f$ and $A \ngeqq 0$
means that there exists $f_0$ such that $\langle f_0| A|f_0 \rangle < 0$.

Suppose that $N$ identical particles with mass $m$ in $\mathbb{R}^3$ interact through $V_0 (r_i - r_j)$, where $r_i$ for $i = 1, \ldots, N$ denotes particle's position vector and 
$V_0 (r) \leq 0$. We also require 
that $V_0 (r)$ is a bounded and finite range potential, that is there exists $r_V$ such that $V_0 (r) = 0$ for $|r| \geq r_V$ 
($V_0$ could be a spherical square well, for example). We choose the system of units, where $\hbar^2/m = 1$. The pair interaction is chosen in such a way that 
the following holds 
\begin{eqnarray}
-\Delta_r + V_0 (r) \geq 0 \label{31.1}\\
-\Delta_r + (1+\varepsilon)V_0 (r) \ngeqq 0 \quad \quad \textnormal{for all $\varepsilon >0$} . \label{31.2}
\end{eqnarray}
Eqs. (\ref{31.1})--(\ref{31.2}) mean that each pair of particles has a zero energy resonance (or is at critical coupling \cite{jpa2}). 
The ground state wave functions for $N=3$ and $N=4$ are determined through equations
\begin{eqnarray}
\left[ T^{(3)} + \sum_{1 \leq i < j \leq 3} V_0 (r_i - r_j) \right] \psi^{(3)}_0 = E^{(3)}_0 \psi^{(3)}_0   \label{31.3}\\
\left[ T^{(4)} + \sum_{1 \leq i < j \leq 4} V_0 (r_i - r_j) \right] \psi^{(4)}_0 = E^{(4)}_0 \psi^{(4)}_0 ,    \label{31.4}
\end{eqnarray}
where $T^{(3)}$,  $T^{(4)}$ are kinetic energy operators with the center of mass motion removed for $N = 3$ and $N = 4$ respectively. The wave functions 
$\psi^{(3)}_0$ and $\psi^{(4)}_0$ are normalized and symmetric with respect to the interchange of particles. 
By equation in \cite{stecher,naturephysics,hammer} 
\begin{equation}
 E^{(4)}_0 = C_u E^{(3)}_0 , \label{31.7}
\end{equation}
where the universal constant $C_u \simeq 4.6$ does not depend on the interaction potential $V_0$. Eq.~(\ref{31.7}) and the value of the universal constant were established 
through numerical calculations. Here we consider only ground states but the same argument can be applied to higher states if they are non--degenerate. 

Let us introduce the perturbation potentials and define $\chi_R , \eta_R : \mathbb{R}^3 \to \mathbb{R}$ so that $\chi_R (r) = 1$ for $|r| \in [0, R] $ and zero otherwise; 
$\eta_R (r) = 1$ for $|r| \in (R, 2R] $ and zero otherwise. So far we fix the parameter $R >0$ and define 
\begin{equation}
 V_\lambda (r) := V_0 (r) + \lambda \chi_R (r) - B(\lambda) \eta_R (r) . 
\end{equation}
The first coupling constant $\lambda \geq 0$ and the second coupling constant $B(\lambda)$ are determined through the equations 
\begin{eqnarray}
-\Delta_r + V_\lambda (r) \geq 0 \label{31.1a}\\
-\Delta_r + (1+\varepsilon)V_\lambda (r) \ngeqq 0 \quad \quad \textnormal{for all $\varepsilon >0$} . \label{31.2a}
\end{eqnarray}
That is, the perturbed potential retains the property that particle pairs have a zero energy resonance. It is easy to see that for each $\lambda \geq 0$ 
Eqs.~(\ref{31.1a})--(\ref{31.2a}) determine $B(\lambda) \geq 0$ uniquely. In Proposition~\ref{propo} we would show that in the vicinity of $\lambda =0$ the function $B(\lambda)$ 
has a well-defined first derivative $B'(\lambda):= dB /d\lambda$. For small $\lambda$ the perturbed versions of Eqs.~(\ref{31.3})--(\ref{31.4}) read 
\begin{eqnarray}
\left[T^{(3)} + \sum_{1 \leq i < j \leq 3} V_\lambda (r_i - r_j) \right] \psi^{(3)}_\lambda = E^{(3)}_\lambda \psi^{(3)}_\lambda   \label{31.5}\\
\left[T^{(4)} + \sum_{1 \leq i < j \leq 4} V_\lambda (r_i - r_j) \right] \psi^{(4)}_\lambda = E^{(4)}_\lambda \psi^{(4)}_\lambda .    \label{31.6}
\end{eqnarray}
Since the particle pairs interacting through $V_\lambda$ have a zero energy resonance Eq.~(\ref{31.7}) takes the form $ E^{(4)}_\lambda = C_u E^{(3)}_\lambda $. Differentiating 
both parts of this equation with respect to $\lambda$ and setting $\lambda = 0$ gives \cite{kato,reed,feynman}
\begin{eqnarray}
\fl  3C_u \langle \psi^{(3)}_0 | \chi_R (r_1 - r_2) | \psi^{(3)}_0 \rangle = 6 \langle \psi^{(4)}_0 | \chi_R (r_1 - r_2) |\psi^{(4)}_0 \rangle - 
6 B'(0) \langle \psi^{(4)}_0 | \eta_R (r_1 - r_2) |\psi^{(4)}_0 \rangle \nonumber\\
+ 3 C_u B'(0) \langle \psi^{(3)}_0 | \eta_R (r_1 - r_2) |\psi^{(3)}_0 \rangle . \label{31.11}
\end{eqnarray}
In (\ref{31.11}) we have used the permutation symmetry. Since $\psi^{(3)}_0$, $\psi^{(4)}_0$ are fixed square integrable functions for $R \to \infty$ we get 
\begin{eqnarray}
 \langle \psi^{(i)}_0 | \chi_R (r_1 - r_2) | \psi^{(i)}_0 \rangle = 1 + \hbox{o} (R) \quad \quad i = 3, 4 \label{1.11}\\
 \langle \psi^{(i)}_0 | \eta_R (r_1 - r_2) | \psi^{(i)}_0 \rangle = \hbox{o} (R) \quad \quad i = 3, 4. \label{1.12}
\end{eqnarray}
By Proposition~\ref{propo} from (\ref{31.11}), (\ref{1.11}), (\ref{1.12}) it follows that 
\begin{equation}
 3C_u = 6 + \hbox{o} (R) , 
\end{equation}
which means that $C_u = 2$ in clear contradiction with the numerical value $4.66$ in \cite{stecher} or $4.58$ in \cite{naturephysics}. 

We only need to prove the following technical result in the two--particle problem 
\begin{proposition}\label{propo}
$B(\lambda)$ is differentiable in the vicinity of $\lambda = 0$. Its derivative at zero  
$|B'(0)|$ remains bounded for $R \to \infty$. 
\end{proposition}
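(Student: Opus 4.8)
\noindent\emph{Proof proposal.}
The plan is to express the critical-coupling condition (\ref{31.1a})--(\ref{31.2a}) as a single scalar equation for the regular zero-energy solution of the radial two-body problem, and then apply the implicit function theorem. Because $V_0$, $\chi_R$, $\eta_R$ are spherically symmetric and the relevant threshold behaviour of $-\Delta_r + V_\lambda(r)$ in $\mathbb{R}^3$ occurs in the $s$-wave sector, I would pass to the reduced operator $-d^2/dr^2 + V_\lambda(r)$ on $L^2(0,\infty)$ with a Dirichlet condition at the origin (acting on $u(r)=r\psi(r)$). Let $\varphi_{\lambda,B}$ be the regular solution of $-\varphi'' + (V_0+\lambda\chi_R-B\eta_R)\varphi = 0$, $\varphi(0)=0$, $\varphi'(0)=1$, with $B$ kept as a free parameter. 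For $R>r_V$ one has $V_\lambda\equiv 0$ on $(2R,\infty)$, so there $\varphi_{\lambda,B}(r)=\alpha(\lambda,B)+\beta(\lambda,B)(r-2R)$ with $\alpha=\varphi_{\lambda,B}(2R)$ and $\beta=\varphi'_{\lambda,B}(2R)$. A finite-range $s$-wave potential has no genuine zero-energy eigenstate (a square-integrable solution would vanish identically for large $r$, hence everywhere, contradicting $\varphi'(0)=1$), so the pair is at critical coupling precisely when $\beta(\lambda,B)=0$ and $\alpha(\lambda,B)\neq 0$. At $\lambda=B=0$ this is the hypothesis (\ref{31.1})--(\ref{31.2}); Sturm oscillation together with the absence of negative eigenvalues forces $\varphi_{0,0}>0$ on $(0,\infty)$, so $\varphi_{0,0}\equiv\alpha_0:=\alpha(0,0)>0$ on $(r_V,\infty)$.

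Since $\lambda$ and $B$ enter the coefficient of the ODE affinely, standard results on the dependence of ODE solutions on parameters (Picard iteration for the associated Volterra equation) give real-analytic dependence of $(\varphi_{\lambda,B},\varphi'_{\lambda,B})\in C([0,2R])^2$ on $(\lambda,B)$; in particular $G(\lambda,B):=\varphi'_{\lambda,B}(2R)=\beta(\lambda,B)$ is real-analytic with $G(0,0)=0$. For the implicit function theorem I would compute the two first-order derivatives at $(0,0)$: setting $w=\partial_B\varphi_{\lambda,B}$, differentiation of the ODE gives $-w''+V_\lambda w=\eta_R\varphi_{\lambda,B}$, $w(0)=w'(0)=0$, and solving by variation of parameters against the fundamental pair $\{\varphi_{\lambda,B},\theta_{\lambda,B}\}$ (Wronskian normalised) and then evaluating beyond the supports yields, at $(\lambda,B)=(0,0)$,
\begin{equation*}
\partial_B G(0,0)=-\frac{1}{\alpha_0}\int_R^{2R}\varphi_{0,0}(s)^2\,ds\neq 0,\qquad \partial_\lambda G(0,0)=\frac{1}{\alpha_0}\int_0^R\varphi_{0,0}(s)^2\,ds ,
\end{equation*}
the factor $1/\alpha_0$ coming from the Wronskian normalisation together with $\varphi'_{0,0}\equiv 0$ on $(r_V,\infty)$. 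Because $\partial_B G(0,0)\neq 0$, the implicit function theorem produces a real-analytic (hence differentiable) branch $\lambda\mapsto B(\lambda)$ near $\lambda=0$ with $B(0)=0$ and $G(\lambda,B(\lambda))=0$; continuity keeps $\alpha(\lambda,B(\lambda))>0$, so this branch is the one distinguished by (\ref{31.1a})--(\ref{31.2a}), and differentiating $G(\lambda,B(\lambda))=0$ gives
\begin{equation*}
B'(0)=-\frac{\partial_\lambda G(0,0)}{\partial_B G(0,0)}=\frac{\int_0^R\varphi_{0,0}(s)^2\,ds}{\int_R^{2R}\varphi_{0,0}(s)^2\,ds}>0 .
\end{equation*}
For the boundedness as $R\to\infty$ I would use that, once $R>r_V$, $\varphi_{0,0}\equiv\alpha_0$ on $(r_V,\infty)$, so $\int_R^{2R}\varphi_{0,0}^2\,ds=\alpha_0^2 R$ exactly while $\int_0^R\varphi_{0,0}^2\,ds=\alpha_0^2 R+O(1)$; hence $B'(0)=1+O(1/R)\to 1$, which in particular stays bounded.

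I expect the only nontrivial step to be the equivalence between the operator inequalities (\ref{31.1a})--(\ref{31.2a}) and the scalar condition "$\beta=0$, $\alpha\neq 0$'': i.e.\ justifying that the threshold lives in the $s$-wave, that a true zero-energy bound state is impossible for a finite-range potential, and that the positivity of $\varphi_{0,0}$ --- and hence $\alpha_0\neq 0$ --- survives the perturbation. Once this is settled, the ODE perturbation computation, the implicit function theorem, and the $R\to\infty$ estimate are routine.
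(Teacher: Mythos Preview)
Your proof is correct and takes a genuinely different route from the paper. The paper works with the Birman--Schwinger operator $D_0=-(-\Delta)^{-1/2}V_0(-\Delta)^{-1/2}$, invokes the Perron--Frobenius type theorem (Reed--Simon~XIII.43) to get non-degeneracy and positivity of the top eigenvector $\varphi_0$, then applies analytic perturbation theory to the family $D(\lambda,B)$ and the implicit function theorem to the equation $\mu(\lambda,B)=1$; finally it sets $\psi_0=(-\Delta)^{-1/2}\varphi_0$ and derives two-sided bounds $A_1/(1+|r|)\le\psi_0(r)\le A_2/(1+|r|)$ to estimate $B'(0)=\int_{|r|\le R}\psi_0^2\big/\int_{R<|r|\le 2R}\psi_0^2$. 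You instead reduce to the $s$-wave radial ODE, encode criticality as the single scalar equation $G(\lambda,B)=\varphi'_{\lambda,B}(2R)=0$, compute $\partial_\lambda G$ and $\partial_B G$ via the Wronskian identity $(\varphi w'-w\varphi')'=\pm(\textrm{perturbation})\varphi^2$, and apply the implicit function theorem directly.

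The two formulas for $B'(0)$ in fact coincide: with $u(r)=r\psi_0(r)$ one has $\int_{|r|\le R}\psi_0^2\,d^3r=4\pi\int_0^R u^2$, and your $\varphi_{0,0}$ is proportional to $u$. What the paper's route buys is that it never needs $V_0$ to be radial (positivity improving and the Green's-function bound $\psi_0\sim C_0/|r|$ use only $V_0\le 0$ and finite range), whereas your reduction to the one-dimensional problem does. What your route buys is economy and sharpness: you avoid the operator-theoretic machinery entirely, and because you observe that $\varphi_{0,0}\equiv\alpha_0$ is \emph{exactly} constant beyond $r_V$ (not merely trapped between $A_1/(1+r)$ and $A_2/(1+r)$), you obtain $B'(0)=1+O(1/R)$ rather than just boundedness. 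The point you flag as the only nontrivial step---that the threshold is governed by the $s$-wave and that $\alpha_0\neq0$---is indeed standard (centrifugal barrier plus Sturm oscillation), so the proposal is complete modulo that justification.
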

\begin{proof}
 We shall need the Birman-Schwinger (BS) principle in the form that is discussed in Appendix~A in \cite{gridnevvugalter}. 
Let us introduce the BS operator 
\begin{equation}
 D_0 :=  -\bigl(-\Delta_r\bigr)^{-1/2} V_0 \bigl(-\Delta_r\bigr)^{-1/2} . 
\end{equation}
By Eqs.~(\ref{31.1})--(\ref{31.2}) and the BS principle there exists 
$\varphi_0 \in L^2 (\mathbb{R}^3)$, $\|\varphi_0\| = 1$ such that the following equation holds 
\begin{eqnarray}
D_0 \varphi_0 = \varphi_0 \label{gov}\\
\sup \sigma(D_0) = \|D_0\|= 1, 
\end{eqnarray}
and $\sigma(A)$ always denotes the spectrum of the operator $A$. 
The operator $D_0$ is bounded and positivity improving, which means that for any $f \geq 0$ one also has $D_0 f \geq 0$ (for mathematical details see \cite{gridnevvugalter}). 
Thus by Theorem~XIII.43 in vol.~4 of \cite{reed} its eigenvalue 
equal to one is non--degenerate and $\varphi_0 >0$. Let us introduce the perturbed BS operator 
\begin{equation}
\fl D(\lambda, B) := D_0 -  \lambda \bigl(-\Delta_r\bigr)^{-1/2} \chi_R \bigl(-\Delta_r\bigr)^{-1/2} + B \bigl(-\Delta_r\bigr)^{-1/2} \eta_R \bigl(-\Delta_r\bigr)^{-1/2} , 
\end{equation}
which depends on the parameters $\lambda, B \in \mathbb{R}$. Let us define  
\begin{equation}
 \mu (\lambda , B) := \sup \sigma \bigl(D(\lambda, B)\bigr) , 
\end{equation}
where $\mu : \mathbb{R} \times \mathbb{R}  \to \mathbb{R} $ is jointly continuous in both arguments. Clearly, $\mu (0,0) = 1$ and by continuity 
for $\lambda, B$ around zero 
$\mu(\lambda, B)$ is the largest non-degenerate eigenvalue of $D(\lambda, B)$. By standard perturbation theory \cite{kato,reed} 
$\mu(\lambda, B)$ is analytic in $\lambda$ around 
$\lambda = 0$ and $B$ small and it is also analytic in $B$ around $B = 0$ and $\lambda$ small. 

We define the function $B(\lambda)$ through the equation 
\begin{equation}
  \mu (\lambda , B(\lambda)) = 1. \label{31.21a}
\end{equation}
In this case by the BS principle we guarantee that Eqs.~(\ref{31.1a})--(\ref{31.2a}) are satisfied. By the implicit function theorem (\ref{31.21a}) uniquely 
determines $B(\lambda)$ and $B'(\lambda)$ in the vicinity of $\lambda = 0$. The derivative can be explicitly calculated from (\ref{31.21a}) as follows 
\begin{equation}
 B'(0) = - \frac{\left(\frac{\ddp \mu }{\ddp \lambda}\right)_{\lambda, B = 0}}{\left(\frac{\ddp \mu }{\ddp B}\right)_{\lambda, B = 0}} = 
\frac{\langle \varphi_0 | \bigl(-\Delta_r\bigr)^{-1/2} \chi_R \bigl(-\Delta_r\bigr)^{-1/2} | \varphi_0 \rangle}{\langle \varphi_0 | \bigl(-\Delta_r\bigr)^{-1/2} \eta_R \bigl(-\Delta_r\bigr)^{-1/2} | \varphi_0 \rangle} \label{31.21}
\end{equation}
Let us define 
\begin{equation}
 \psi_0 (r) := \bigl(-\Delta_r\bigr)^{-1/2} \varphi_0 . 
\end{equation}
By standard results (see Sec.~XIII.11 in \cite{reed}) $\psi_0 (r)$ is a continuous function and due to positivity improving property of $\bigl(-\Delta_r\bigr)^{-1/2}$ we also have 
$\psi_0 (r) >0$. 
Below we shall prove that there exist constants $A_{1,2} >0$ such that the following bound holds 
\begin{equation}
 \frac{A_1}{(1+|r|)} \leq \psi_0 (r) \leq  \frac{A_2}{(1+|r|)} . \label{31.41}
\end{equation}
Let us estimate the nominator and the denominator in (\ref{31.21}) using (\ref{31.41}). For the nominator we get 
\begin{equation}
 \fl \langle \varphi_0 | \bigl(-\Delta_r\bigr)^{-1/2} \chi_R \bigl(-\Delta_r\bigr)^{-1/2} | \varphi_0 \rangle \leq 4\pi A_2^2 \int_0^R \frac{t^2 dt}{(1+t)^2} = 4\pi A_2^2 
\bigl[R- \ln R + \mathcal{O} (1) \bigr]. \label{31.42}
\end{equation}
And similarly 
\begin{equation}
 \fl \langle \varphi_0 | \bigl(-\Delta_r\bigr)^{-1/2} \eta_R \bigl(-\Delta_r\bigr)^{-1/2} | \varphi_0 \rangle \geq 4\pi A_1^2 \int_R^{2R} 
\frac{t^2 dt}{(1+t)^2} = 4\pi A_1^2 \bigl[R + \mathcal{O} (1) \bigr]. \label{31.43}
\end{equation}
Substitution of the estimates (\ref{31.42})--(\ref{31.43}) into (\ref{31.21}) finishes the proof. 
It remains to prove (\ref{31.41}). Due to continuity it suffices to prove that (\ref{31.41}) holds for $|r| \geq 2r_V$. By (\ref{gov}) 
\begin{equation}
 \psi_0 (r) = \frac 1{4\pi} \int \frac{V_0 (r') \psi_0 (r')}{|r-r'|} d^3r' , 
\end{equation}
which for $|r| \geq 2r_V$ gives 
\begin{equation}\label{1.1}
 \frac{C_0}{|r| - r_V} \leq \psi_0 (r) \leq \frac{C_0}{|r| + r_V} , 
\end{equation}
and where we defined $C_0 := (4\pi)^{-1}\int V_0 (r') \psi_0 (r') d^3 r' >0$. From (\ref{1.1}) it easily follows that one can always choose $A_{1,2}$ to make Eq.~(\ref{31.41}) hold. 
\end{proof}

In conclusion, let us remark that similar calculation shows that even in the case of the Efimov trimer 
the relation of the energies $E_n /E_{n+1}$ for a given $n$ is not constant for all pair-interactions; it holds universally only in the limit of large $n$! 
(In the case of large $n$ the argument presented in this paper breaks down at Eq.~(\ref{1.11}), because for large $n$ Efimov states are totally spreading in space, 
see Appendix in \cite{jpa2}). Such limit of large $n$ is unavailable for $N \geq 4$, see \cite{greenwood,gridnevvugalter}. 
Thus a relation (\ref{31.7}) can only hold approximately for a narrow class of pair--interactions. Hopefully, the result presented 
here would help to set such limits on Eq.~(\ref{31.7}) and analogous relations in the future.

\ack 

The author would like to thank Prof. Walter Greiner for the warm hospitality at FIAS.  

\section*{References}

\end{document}